\newcommand{\allocs}{\mathcal{A}}
\newcommand{\BibTeX}{\rm B\kern-.05em{\sc i\kern-.025em b}\kern-.08em\TeX}
\newtheorem{Theorem}{Theorem}[section]
\newtheorem{Proposition}[Theorem]{Proposition}
\newtheorem{Corollary}[Theorem]{Corollary}
\theoremstyle{definition}
\newtheorem{Definition}[Theorem]{Definition}
\newtheorem{Example}[Theorem]{Example}
\newtheorem{Open}[Theorem]{Open Question}
\newcommand{\no}[1]{\textcolor{red}{#1}}
\newcommand{\noga}[1]{\no{(Noga says: #1)}}
\newcommand{\fullversion}[1]{}
\newcounter{phase}[algorithm]
\newlength{\phaserulewidth}
\title{The Min Max Average Cycle Weight Problem}
\author[1]{Noga Klein Elmalem}
\author[2]{Rica Gonen}
\author[3]{Erel Segal-Halevi}
\affil[1]{The Open University of Israel\\noga486@gmail.com}
\affil[2]{The Open University of Israel\\ricagonen@gmail.com}
\affil[3]{Ariel University, Israel\\erelsgl@gmail.com}
\begin{document}
\maketitle

\begin{abstract}
When an old apartment building is demolished and rebuilt, how can we fairly redistribute the new apartments to minimize envy among residents? We reduce this question to a combinatorial optimization problem called the \emph{Min Max Average Cycle Weight} problem. In that problem we seek to assign objects to agents in a way that minimizes the maximum average weight of directed cycles in an associated envy graph. While this problem reduces to maximum-weight matching when starting from a clean slate (achieving polynomial-time solvability), we show that this is not the case when we account for preexisting conditions, such as residents' satisfaction with their original apartments. Whether the problem is polynomial-time solvable in the general case remains an intriguing open problem.

	
\end{abstract}

\section{Introduction}
The problem in this note is motivated by \emph{Fair Allocation of Improvements}.
In this setting there are $n$ agents, each of whom owns a single apartment in an old house. The house is demolished and a new house is constructed, wherein each agent  receives a new apartment.
Typically, each agent enjoys an \emph{improvement}, which can be measured by the difference between the new and the old apartment values. 
However, some agents might feel that their neighbors received a larger improvement, which might lead to envy. Our goal is to minimize that envy.%

In order to gain feedback from larger audiences, we have reduced our problem to a more general combinatorial problem: Min Max Average Cycle Weight, which may be interesting in its own right. We present it formally below.%
\footnote{
	We refer the interested reader to  \url{https://arxiv.org/abs/2504.16852} for more details on the fair improvement problem and how it is related to Min Max Average Cycle Weight.
	Envy minimization is also discussed by 
	\citet{nguyen2014minimizing,netzer2016distributed,peters2022robust}, using different techniques.
}

\section{Notation}
Let $N$ be a set of \emph{agents} and $M$ a set of \emph{objects}, with $|N|=|M|=n$.
Each agent $i\in N$ assigns a positive real value to each object $o\in M$, denoted $v_{i}(o)$.
We denote by $H_v$ the complete bipartite graph on the node set $N\sqcup M$, where each edge $i-o$ has value $v_{i,o}$.

An \emph{allocation} is a perfect matching in $H_v$, representing an assignment of exactly one object per agent.
We denote an allocation by $A$, where $A_i$ denotes the object assigned to agent $i\in N$.
We denote the set of all $n!$ possible allocations by $\allocs$.

Given an allocation $A\in \allocs$, the \emph{envy} of agent $i$ at agent $j$ 
is denoted $E_A(i,j)$ and defined as: $E_A(i,j) := v_i(A_j) - v_i(A_i)$.
The \emph{envy-graph} of assignment $A$, denoted $G_A$, is a complete directed graph with $N$ the set of nodes, and the weight of arc $i\to j$ is $E_A(i,j)$.%
\footnote{
We use the term ``value'' for the edges in the undirected bipartite graph $H_v$,
and the term ``weight'' for the arcs in the directed graph $G_A$.
}
\footnote{
The envy-graph has been introduced by \citet{aragones1995derivation} and used extensively in later works, e.g. \citet{halpern2019fair} and \citet{brustle2020one}.
}

We are interested in directed cycles in the graph $G_A$. Particularly, we are interested in the \emph{average weight} of each directed cycle (the sum of weights on the arcs in the cycle, divided by the number of arcs).

\begin{Definition}
For any directed graph $G$, the \emph{Maximum Average Cycle Weight}, denoted $MACW(G)$, is the 
maximum over all directed cycles in $G$, of the average cycle weight.
\end{Definition}
There are strongly-polynomial-time algorithms for computing $MACW(G)$ (refer to the Wikipedia page "Minimum mean weight cycle" for details).

Every allocation $A$ yields a potentially different envy-graph $G_A$, with a potentially different maximum-average-cycle-weight $MACW(G_A)$. We are interested in finding an allocation for which $MACW(G_A)$ is as small as possible.

\begin{Example}
\label[example]{exm:macw}
There are 3 objects and 3 agents with valuations
    \[
\begin{bmatrix}
  & o_1 & o_2 &o_3 \\
  i_1 & 3 & 2 & 1\\
  i_2 & 3 & 5 & 7 \\
  i_3 & 7 & 8 & 9 \\
\end{bmatrix}.
\]
In the allocation $A_1 = \{o_1\}, A_2 = \{o_2\}, A_3 = \{o_3\}$, the weight of edge $i_1\to i_3$ is $-2$ and the weight of edge $i_3\to i_1$ is $-2$ too, so the weight of cycle $i_1\to i_3\to i_1$ is $-4$, and its average weight is $-2$.
Similarly, we can compute the average weight of each cycle in $G_A$. The results are summarized in 
\Cref{tab: cycle costs}.
The maximum average weights for each allocation are boldfaced. The allocation with smallest $MACW(G_A)$ is the second one, $(o_1, o_3, o_2)$, where the MACW is $-1/2$. Note that the total value of this allocation is $3+7+8=18$.
\begin{table}
\caption{
	\label{tab: cycle costs}
	Average cycle weights in \Cref{exm:macw}.
}
	\begin{center}
	\begin{tabular}{|c|c|c|c|c|c|c|c|}\toprule
	$A_1$ & $A_2$ & $A_3$& \makecell{Cycle \\$(i_1, i_2)$} & \makecell{Cycle \\$(i_1, i_3)$}& \makecell{Cycle \\$(i_2,i_3)$} &\makecell{Cycle \\$(i_1, i_2, i_3)$} & \makecell{Cycle \\$(i_1, i_3, i_2)$} \\
        \midrule
        $o_1$ &$o_2$ &$o_3$ & -3/2 & -4/2 & \textbf{1/2} & -1/3 & -5/3
        \\ 
        \midrule
        $o_1$ &$o_3$ &$o_2$ & -6/2 & -2/2 & \textbf{-1/2} & -5/3 & -4/3
        \\ 
        \midrule
        $o_2$ &$o_1$ &$o_3$ & \textbf{3/2} & -2/2 & 2/2 & 4/3 & -1/3
        \\ 
        \midrule
        $o_2$ &$o_3$ &$o_1$ & -3/2 & \textbf{2/2} & -2/2 & -4/3 & 1/3
        \\ 
        \midrule
        $o_3$ &$o_1$ &$o_2$ & \textbf{6/2} & 2/2 & 1/2 & 5/3 & 4/3
        \\
        \midrule
        $o_3$ &$o_2$ &$o_1$ & 3/2 & \textbf{4/2} & -1/2 & 1/3 & 5/3
        \\
		\bottomrule
	\end{tabular}
	\end{center}
\end{table}
\end{Example}

\section{Background results}
This section relates the average cycle weight to maximum-value matching. The results in this section are probably already known; we repeat them here as a background for the open question, presented in \Cref{sec:open}.

\begin{Proposition}
\label[proposition]{prop:maxsum}
An allocation $A\in \allocs$ satisfies $MACW(G_A)\leq 0$ if and only if $A$ is a maximum-value matching in $H_v$, that is, $\sum_{i\in N} v_i(A_i) \geq \sum_{i\in N} v_i(B_i)$ for all allocations $B\in \allocs$.
\end{Proposition}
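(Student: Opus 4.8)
The plan is to establish a single structural identity linking directed cycles of $G_A$ to alternative allocations, and then read off both implications from it. For a directed cycle $C=(i_1\to i_2\to\cdots\to i_k\to i_1)$ in $G_A$ (with $i_{k+1}:=i_1$), its total weight is $\sum_{t=1}^{k}E_A(i_t,i_{t+1})=\sum_{t=1}^{k}\bigl(v_{i_t}(A_{i_{t+1}})-v_{i_t}(A_{i_t})\bigr)$. The crucial observation is that this equals $V(B)-V(A)$, where $V(X):=\sum_{i\in N}v_i(X_i)$ denotes the total value of an allocation $X$, and $B$ is obtained from $A$ by \emph{rotating objects along $C$}: set $B_{i_t}:=A_{i_{t+1}}$ for every agent on the cycle and $B_j:=A_j$ for every agent off the cycle. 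Since $B$ merely permutes the objects $\{A_{i_1},\dots,A_{i_k}\}$ among the agents $\{i_1,\dots,i_k\}$, it is again a perfect matching, and all terms outside $C$ cancel in $V(B)-V(A)$, leaving exactly the cycle weight. Note also that since every cycle has $k\geq 1$ arcs, its average weight is non-positive iff its total weight is non-positive; hence $MACW(G_A)\leq 0$ is equivalent to ``every directed cycle of $G_A$ has non-positive total weight.''

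For the direction ($\Leftarrow$), assume $A$ is a maximum-value matching. Given any directed cycle $C$, form the rotated allocation $B$ as above; then its total weight equals $V(B)-V(A)\leq 0$ by maximality of $A$. As this holds for every cycle, $MACW(G_A)\leq 0$.

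For the converse ($\Rightarrow$), I would argue by contraposition. Suppose $A$ is not maximum-value, so some allocation $B$ satisfies $V(B)>V(A)$. Viewing $A$ and $B$ as bijections $N\to M$, the composition $\sigma:=A^{-1}\circ B$ is a permutation of $N$ with $B_i=A_{\sigma(i)}$. Decompose $\sigma$ into disjoint cycles; each nontrivial cycle $(i_1,\dots,i_k)$ with $\sigma(i_t)=i_{t+1}$ satisfies $B_{i_t}=A_{i_{t+1}}$, so its contribution $\sum_{t=1}^{k}\bigl(v_{i_t}(B_{i_t})-v_{i_t}(A_{i_t})\bigr)$ to $V(B)-V(A)$ is precisely the weight of the directed cycle $i_1\to\cdots\to i_k\to i_1$ in $G_A$ (fixed points contribute $0$). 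Since $V(B)-V(A)>0$, at least one such cycle has strictly positive weight, hence strictly positive average weight, so $MACW(G_A)>0$, which is the contrapositive.

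The whole argument is driven by the rotation identity; once it is stated correctly, the rest is immediate. The step needing the most care is the bookkeeping in the converse: keeping the direction of the permutation $\sigma$ and of the arcs $i_t\to i_{t+1}$ consistent, so that a value-improving exchange between $A$ and $B$ genuinely traces out a positively-weighted \emph{directed} cycle in $G_A$ rather than its reverse. I expect no serious obstacle beyond this indexing care.
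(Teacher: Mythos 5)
Your proof is correct and follows essentially the same route as the paper's: the forward direction via rotating objects along a cycle to produce an alternative allocation whose value difference equals the cycle weight, and the converse by decomposing the permutation taking $A$ to a better allocation $B$ into disjoint cycles, one of which must carry positive weight. Your explicit treatment of $\sigma = A^{-1}\circ B$ merely formalizes what the paper states informally (``switching objects along one or more directed cycles''), so there is no substantive difference.
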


In \Cref{exm:macw}, there is only one maximum-value allocation (the second one), and indeed its $MACW(G_A)$ is negative ($-1/2$).

\begin{proof}[Proof of \Cref{prop:maxsum}]
Suppose first that the $A$ is a maximum-value matching. Let $C$ be any directed cycle in $G_A$. W.l.o.g. denote $C = (1 \to 2 \to \cdots \to k \to 1)$. Then the total weight of $C$ is [where we denote $k+1 \equiv 1$]:
\begin{align*}
&
\sum_{i = 1}^k \left(v_i(A_{i+1}) - v_i(A_i)\right)
\\
=&
\left(\sum_{i = 1}^k v_i(A_{i+1})\right) -
\left(\sum_{i = 1}^k  v_i(A_i)\right)
\\
=&
\left(
\sum_{i = 1}^k v_i(A_{i+1})
+
\sum_{i = k+1}^n v_i(A_{i})
\right)
-
\left(
\sum_{i = 1}^k  v_i(A_i)
+
\sum_{i = k+1}^n v_i(A_{i})
\right)
\end{align*}
The right-hand term is the sum of agents' values in allocation $A$; the left-hand term is the sum of agents' values in an alternative allocation, in which $A_{i+1}$ is allocated to agent $i$ for all $i\in \{1,\ldots,k\}$ (and the other agents keep their items in $A$).
By assumption, $A$ maximizes the agents' values; hence, the sum is at most $0$.
This holds for all cycles; hence, $MACW(G_A)\leq 0$.

Suppose next that allocation $A$ is not a maximum-value matching, and let $B$ be a different allocation in which the sum of agents' values is higher.
We can transform $A$ into $B$ by switching objects between agents along one or more directed cycles. At least one of these cycles must strictly increase the sum of values of the involved agents. That is, there must be some cycle (denoted w.l.o.g. $C = (1 \to 2 \to \cdots \to k \to 1)$), such that  [where again we denote $k+1 \equiv 1$]: 
\begin{align*}
\sum_{i = 1}^k v_i(A_{i+1}) > \sum_{i = 1}^k v_i(A_{i}).
\end{align*}
The total weight of cycle $C$ is the difference between the LHS and the RHS, which is positive. Hence, its average weight is positive too, so  $MACW(G_A)>0$.
\end{proof}
From \Cref{prop:maxsum}, it is easy to deduce the following: 
\begin{Proposition}
\label[proposition]{prop:maxsum-macw}
An allocation $A\in \allocs$ minimizes $MACW(G_A)$ among all allocations, if and only if it is a maximum-value matching in $H_v$.
\end{Proposition}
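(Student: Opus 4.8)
The plan is to lean entirely on \Cref{prop:maxsum}, which already sorts the allocations into two classes: the maximum-value matchings, characterized by $MACW(G_A)\le 0$, and all the rest, characterized by $MACW(G_A)>0$. First I would observe that $\allocs$ is finite and nonempty and that a maximum-value matching always exists, so $\min_{A\in\allocs} MACW(G_A)\le 0$. Since every non-maximum-value allocation has strictly positive $MACW$ by \Cref{prop:maxsum}, none of them can attain a minimum that is already known to be $\le 0$. This settles the ``only if'' direction: any minimizer must be a maximum-value matching.

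The ``if'' direction is where the real work lies: I must show that \emph{every} maximum-value matching attains the minimum, equivalently that all maximum-value matchings share one common value of $MACW$. The key step is to prove that if two distinct maximum-value matchings $A\ne B$ exist, then $MACW(G_A)=0$. I would decompose the reallocation turning $A$ into $B$ into directed cycles on the agents, exactly as in the second half of the proof of \Cref{prop:maxsum}. Applying a single such cycle to $A$ produces an allocation whose total value equals that of $A$ plus the weight of the cycle in $G_A$; maximality of $A$ forces this weight to be $\le 0$. Reading the same cycle backwards as a reallocation turning $B$ toward $A$, maximality of $B$ forces the weight to be $\ge 0$ as well; hence each difference cycle has weight exactly $0$. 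Such a zero-weight cycle certifies $MACW(G_A)\ge 0$, and together with $MACW(G_A)\le 0$ from \Cref{prop:maxsum} this yields $MACW(G_A)=0$. Thus when more than one maximum-value matching exists they all have $MACW=0$, which is the minimum; and when the maximum-value matching is unique it is the unique minimizer by default.

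The main obstacle is precisely this ``if'' direction. A priori, distinct maximum-value matchings could carry different nonpositive $MACW$ values, in which case only the smallest would minimize and the stated equivalence would break; the difference-cycle argument is exactly what rules this out, by showing that the mere existence of a second maximum-value matching pins $MACW$ to $0$. I would treat the unique-matching case separately --- as in \Cref{exm:macw}, where the sole maximum-value matching has $MACW=-1/2<0$ --- since there the minimum is strictly negative and there is no second matching to invoke, but uniqueness makes that matching the minimizer automatically.
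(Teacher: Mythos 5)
Your proof is correct and takes essentially the same route as the paper: both lean on \Cref{prop:maxsum} and decompose the difference between two maximum-value matchings into directed cycles, showing every such cycle has weight exactly zero, hence $MACW = 0$ for all maximum-value matchings whenever more than one exists. The only (minor) difference is how the cycle weights are pinned to zero: you use maximality of $A$ for the $\le 0$ bound and maximality of $B$ applied to the reversed cycles for the $\ge 0$ bound, whereas the paper observes that the cycle weights are each $\le 0$ yet must sum to zero --- both are valid one-line arguments.
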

\begin{proof}
Let allocation $A$ be a maximum-value matching, and let $B$ be any other allocation. 
We have to prove that $MACW(A)\leq MACW(B)$. Consider two cases:

\underline{Case 1:} 
$B$ is not a maximum-value matching.
Then  \Cref{prop:maxsum} implies that $MACW(G_B)> 0 \geq MACW(G_A)$, and we are done.

\underline{Case 2:} 
$B$ is a maximum-value matching.
Then  \Cref{prop:maxsum} implies that $MACW(G_B)\leq 0$ and  $MACW(G_A)\leq 0$.
We can transform $A$ into $B$ by switching objects along one or more directed cycles. The weight of each of these cycles is at most $0$; but since the sum of agents' values in $A$ is the same as in $B$, The weight of each of these cycles must be exactly $0$. Hence, $MACW(G_B) = 0 = MACW(G_A)$, and we are done.
\end{proof}

As a maximum-weight matching can be found in polynomial time, we get:
\begin{Corollary}
\label[corollary]{cor:macw}
It is possible to find in polynomial time, an allocation $A\in\allocs$ for which $MACW(G_A)$ is minimized.
\end{Corollary}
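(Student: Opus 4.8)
The plan is to derive \Cref{cor:macw} directly from \Cref{prop:maxsum-macw}, which has already reduced the optimization problem to a purely combinatorial one. By that proposition, an allocation $A\in\allocs$ minimizes $MACW(G_A)$ if and only if $A$ is a maximum-value matching in the bipartite graph $H_v$. So the entire content of the corollary is that a maximum-value matching can itself be computed in polynomial time, and then we simply output it.

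Concretely, I would proceed in two short steps. First, I would invoke a standard polynomial-time algorithm for maximum-weight bipartite matching. Since $H_v$ is the complete bipartite graph on $N\sqcup M$ with $|N|=|M|=n$ and edge values $v_{i,o}$, finding a perfect matching of maximum total value is exactly the assignment problem, solvable by the Hungarian algorithm (Kuhn--Munkres) in $O(n^3)$ time, or by any maximum-weight bipartite matching routine. Second, I would note that the matching returned is an allocation $A\in\allocs$, and by \Cref{prop:maxsum-macw} this $A$ minimizes $MACW(G_A)$ over all allocations. Hence the whole procedure runs in polynomial time and yields the desired optimal allocation.

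There is essentially no hard part here; the work has all been done in the preceding propositions. The only point worth a sentence of care is that the reduction is \emph{one-directional in the right way}: we never need to compute $MACW(G_A)$ for any candidate allocation, nor enumerate allocations, because \Cref{prop:maxsum-macw} guarantees that optimality of $MACW$ is equivalent to the matching being maximum-value. Thus we sidestep the potentially expensive task of evaluating maximum average cycle weights and instead solve a single assignment problem. If one wanted to be fully explicit, one could also remark that the values $v_{i,o}$ are given as input, so forming the edge-weighted graph $H_v$ takes $O(n^2)$ time, which does not affect the overall polynomial bound.

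I expect the main (and only) subtlety to be purely expository: making clear that the corollary is an immediate consequence of \Cref{prop:maxsum-macw} combined with the known polynomial-time solvability of maximum-weight bipartite matching, rather than requiring any new algorithmic idea. The proof can therefore be stated in just a couple of lines.
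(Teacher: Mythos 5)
Your proof is correct and matches the paper's approach exactly: the paper derives \Cref{cor:macw} as an immediate consequence of \Cref{prop:maxsum-macw} together with the polynomial-time solvability of maximum-weight matching. Your additional details (the Hungarian algorithm, the $O(n^3)$ bound, and the remark that no $MACW$ values ever need to be computed) are just explicit elaborations of the same two-line argument.
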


\section{Open Question}
\label{sec:open}
Now, we would like to extend \Cref{cor:macw} in the following way.

We are given a fixed complete directed graph $G_O$ with node-set $N$.%
\footnote{In our motivating application, $G_O$ represents the envy caused by the original apartments.}
Each edge $i\to j$ in $G_O$ has some weight $E_O(i,j)$.
Denote by $G_A-G_O$ the ``arcwise-difference'' of the two graphs, that it, a complete directed graph in which the weight of edge $i\to j$ is the difference of the weights of the corresponding edges in $G_A$ and $G_O$: $E_A(i,j) - E_O(i,j)$.%
\footnote{
In our motivating application, $G_A-G_O$ represents the envy caused by the improvements.
}

We now present the Min Max Average Cycle Weight problem.
\begin{Open}
Given any fixed graph $G_O$, is it possible to find in polynomial time, an allocation $A\in\allocs$ for which $MACW(G_A-G_O)$ is minimized?
\end{Open}
When $G_O$ assigns zero weights to all arcs, \Cref{cor:macw} implies that the answer is ``yes'', as we can simply compute a maximum-value matching in $H_v$.
But when $G_O$ has even a single arc with a non-zero weight, the solution may no longer be a maximum-value matching. 

\begin{Example}
\label[example]{exm:macw-2}
Continuing \Cref{exm:macw}, suppose $E_O(i_1,i_3) = +4$  and all other weights in $G_O$ are $0$. Then 
the average cycle weights in $G_A-G_O$ are as in 
\Cref{tab: cycle costs-2}.
\begin{table}
\caption{
	\label{tab: cycle costs-2}
Average cycle weights in \Cref{exm:macw-2}.
}
	\begin{center}
	\begin{tabular}{|c|c|c|c|c|c|c|c|}\toprule
	$A_1$ & $A_2$ & $A_3$& \makecell{Cycle \\$(i_1, i_2)$} & \makecell{Cycle \\$(i_1, i_3)$}& \makecell{Cycle \\$(i_2,i_3)$} &\makecell{Cycle \\$(i_1, i_2, i_3)$} & \makecell{Cycle \\$(i_1, i_3, i_2)$} \\
        \midrule
        $o_1$ &$o_2$ &$o_3$ & -3/2 & -8/2 & \textbf{1/2} & -1/3 & -9/3
        \\ 
        \midrule
        $o_1$ &$o_3$ &$o_2$ & -6/2 & -6/2 & \textbf{-1/2} & -5/3 & -8/3
        \\ 
        \midrule
        $o_2$ &$o_1$ &$o_3$ & \textbf{3/2} & -6/2 & 2/2 & 4/3 & -5/3
        \\ 
        \midrule
        $o_2$ &$o_3$ &$o_1$ & -3/2 & \textbf{-2/2} & \textbf{-2/2} & -4/3 & \textbf{-3/3}
        \\ 
        \midrule
        $o_3$ &$o_1$ &$o_2$ & \textbf{6/2} & -2/2 & 1/2 & 5/3 & 0/3
        \\
        \midrule
        $o_3$ &$o_2$ &$o_1$ & \textbf{3/2} & 0/2 & -1/2 & 1/3 & 1/3
        \\
		\bottomrule
	\end{tabular}
	\end{center}
\end{table}

The maximum average weights for each allocation are boldfaced. The allocation with smallest $MACW(G_A)$ is the fourth one, $(o_2, o_3, o_1)$,
with total value $2+7+7=16$; it is different than the maximum-value allocation.
\end{Example}

\newpage
\section*{Acknowledgments}
	The examples were corrected and verified by Aristotle.harmonic.fun \citep{achim2025aristotleimolevelautomatedtheorem}.

\bibliographystyle{ACM-Reference-Format}
\bibliography{ref}

\newpage
\end{document}